\newcommand{\be}{\begin{equation}}
\newcommand{\en}{\end{equation}}
\newcommand{\bea}{\begin{eqnarray}}
\newcommand{\ena}{\end{eqnarray}}
\newcommand{\beano}{\begin{eqnarray*}}
\newcommand{\enano}{\end{eqnarray*}}
\newcommand{\bee}{\begin{enumerate}}
\newcommand{\ene}{\end{enumerate}}
\newcommand{\mc}{\mathcal}
\newcommand{\G}{{\mc G}}
\newcommand{\Hil}{{\cal H}}
\newcommand{\F}{{\cal F}}
\newcommand{\Lc}{{\cal L}}
\newcommand{\D}{{\cal D}}
\newcommand{\C}{{\cal C}}
\newcommand{\Sc}{{\cal S}}
\newcommand{\A}{\mathfrak{A}}
\newcommand{\1}{1 \!\!\! 1}
\newtheorem{thm}{Theorem}
\newtheorem{prop}[thm]{Proposition}
\newtheorem{defn}[thm]{Definition}
\newenvironment{proof}{\noindent {\bf Proof:}}{\hfill$\Box$}
\newcommand{\ip}[2]{\langle {#1},{#2}\rangle}
\newcommand{\vp}{\varphi}
\newcommand{\bedefi}{\begin{defn}$\!\!${\bf }$\;$\rm }
\newcommand{\findefi}{\end{defn}}
\newcommand{\mb}{\mathbb}
\newcommand{\myalert}[1]{{\color{magenta}#1}}
\def\spn{{\sf span\,}}
\newcommand{\e}{{\sf e}}
\def\*{^*}
\begin{document}

\thispagestyle{empty}

\vspace*{1cm}

\begin{center}
{\Large \bf Gibbs states defined by biorthogonal sequences}   \vspace{2cm}\\

{\large F. Bagarello}
\vspace{3mm}\\
 DEIM -Dipartimento di Energia, ingegneria dell' Informazione e modelli Matematici,
\\ Scuola Politecnica, Universit\`a di Palermo, I-90128  Palermo, Italy\\
e-mail: fabio.bagarello@unipa.it
\vspace{2mm}\\
{\large C. Trapani}
\vspace{3mm}\\
  Dipartimento di Matematica ed Informatica,
\\ Universit\`a di Palermo, I-90128  Palermo, Italy\\
e-mail: camillo.trapani@unipa.it
\vspace{2mm}\\
{\large S. Triolo}
\vspace{3mm}\\
  DEIM -Dipartimento di Energia, ingegneria dell' Informazione e modelli Matematici,
\\ Scuola Politecnica, Universit\`a di Palermo, I-90128  Palermo, Italy\\
e-mail: salvatore.triolo@unipa.it
\end{center}

\vspace*{2cm}

\begin{abstract}
\noindent Motivated by the growing interest on PT-quantum mechanics, in this paper we discuss  some facts on generalized Gibbs states and on their related KMS-like conditions. To achieve this, we first consider some useful connections between similar (Hamiltonian) operators and we propose some extended version of the Heisenberg algebraic dynamics, deducing some of their properties, useful for our purposes.
\end{abstract}

\vspace{2cm}


\vfill

\newpage


\section{Introduction and  notations }

In ordinary quantum mechanics a physical system $\Sc$ is described, first of all, by a self-adjoint Hamiltonian operator $H_0$. This is to ensure first that the energies of the system (i.e., the eigenvalues of $H_0$) are real, and secondly that the time evolution of $\Sc$ is unitary. If $\Sc$ lives in a finite dimensional Hilbert space $\Hil$, then $H_0$ is surely bounded. But, when $\dim(\Hil)=\infty$, quite often $H_0$ turns out to be unbounded. Hence, in physics, the role of unbounded operators is crucial in several cases.

In recent years, since the paper \cite{bb}, an increasing interest for systems described by non self-adjoint Hamiltonians, $H\neq H\*$ with only real eigenvalues, has spread among the physicists first, and the mathematicians later. We refer to \cite{PT1}-\cite{PT5}, and references therein, for two reviews and three recent volumes on this topic.
The reason is that, in some concrete applications in physics, it may happen that  three different operators, $H_0=H_0\*$, $H$ and $H\*$  (which, throughout the paper,   will be assumed to be closed and, at least, densely defined) have only point spectra, and that
all their eigenvalues coincide. In particular, in
\cite{baginbagbook} several triples of operators {of this kind} have been
discussed and the following eigenvalues equations have been {found in concrete quantum mechanical models} \be
H_0e_n=ne_n,\qquad H\varphi_n=n\varphi_n, \qquad
H\*\psi_n=n\psi_n, \label{21}\en where  { $\F_e=\{e_n\in\Hil,\,n\geq0\}$ } is an
orthonormal (o.n.) basis for the Hilbert space $\Hil$, while
$\F_\varphi=\{\varphi_n,\,n\geq0\}$ and
$\F_\psi=\{\psi_n,\,n\geq0\}$ are two biorthogonal sets,
$\left<\varphi_n,\psi_m \right>=\delta_{n,m}$, {but} not necessarily
bases for $\Hil$. However, quite often, $\F_\varphi$ and $\F_\psi$
are complete (or total: the only vector which is orthogonal to all the $\varphi_n$'s, or to all the $\psi_n$'s, is the zero vector) in $\Hil$ and, see \cite{baginbagbook}, they are also
$\D$-quasi bases, i.e., they produce a weak resolution of the
identity in a suitable set $\D$, dense in $\Hil$:
$$
\sum_{n}\left<f,\varphi_n\right>\left<\psi_n,g\right>=\sum_{n}\left<f,\psi_n\right>\left<\varphi_n,g\right>=\left<f,g\right>,
$$
for all $f,g\in\D$.

{As it is known}, the o.n. basis $\F_e$ can be used to define a
Gibbs state as follows: \be
\omega_0(X)=\frac{1}{Z_0}\sum_n\left<e_n,{\sf e}^{-\beta
H_0}Xe_n\right>, \label{22a}\en where $Z_0:=\sum_n\left<e_n,{\sf
e}^{-\beta H_0}e_n\right>=\sum_n {\sf e}^{-\beta n}=\frac{{\sf
e}^{\beta}}{{\sf e}^{\beta}-1}$ {and $\beta$ is the inverse
temperature, always positive}. Sometimes $\omega_0$ is
written as $\omega_0(X)= {tr}(\rho X)$, where
$\rho:=\frac{1}{Z_0}\,{\sf e}^{-\beta H_0}$ and $tr$ is the natural trace on $B(\Hil)$. Hence, in view of (\ref{21}), it is  interesting to see what can be done if, in (\ref{22a}), we replace $H_0$ with $H$ or with $H\*$, and the $e_n$'s with the $\psi_n$'s or with the $\varphi_n$'s.

For this reason, in this paper a particular attention is devoted to these Hamiltonian
operators and to their roles in Gibbs-like states:  some of their properties are derived and some
examples are discussed. After this preliminary analysis, we will explore how they can be used to define
some (generalized) Gibbs states, extending in different ways equation (\ref{22a}).

The paper is organized as follows. In Section 2, after some
preliminaries, we discuss the mathematical  {settings} and the functional structure
associated to the operators  $H,$  $H_0$ and $H\*$ that
obey the eigenvalues equations in (\ref{21}). Several pathologies concerning their
structure {will be } considered. In particular we will prove that,
under certain conditions, $H_0$ and $H$ are {\em
similar} operators, in the sense od Definition \ref{def simile} below.

 In Section 3 we extend the Gibbs state $\omega_0$ to the general
 situation where we  {work with} $\F_\varphi$ and $\F_\psi$, rather than with
 $\F_e$, and where we replace $H_0$ with $H$ or with $H^*$. Moreover we will investigate if some (generalized version of) the KMS-relation is also satisfied by these states.  This analysis will force us to introduce different concepts of algebraic dynamics, driven respectively by $H_0$, $H$ and $H\*$.  In Section  4  we consider other possible
generalizations of
  $\omega_0$, and we deduce again the related KMS-like conditions.  Our concluding remarks are given in Section 5.

\section{Preliminary results}
To keep the situation more general, all throughout this section we replace \eqref{21} with the following eigenvalue equations
\be
H_0 e_n=\lambda_n e_n,\qquad H\varphi_n=\lambda_n\varphi_n, \qquad
H\*\psi_n=\lambda_n\psi_n, \label{211}\en
where the $\lambda_n$'n are real numbers
and  we will suppose, in particular, that the sets
$\F_\varphi=\{\varphi_n\}$ and $\F_\psi=\{\psi_n\}$ in (\ref{21}) are Riesz bases. This means that we can find a bounded operator $T$,
with bounded inverse, such that \be \varphi_n=Te_n,\qquad
\psi_n=(T\*)^{-1}e_n, \label{2.1}, \quad \forall n\in {\mathbb N}.\en

We will study, in this
particular situation, the  relation  between $  H$ and $
H_0$ and how this relation is connected with the operators ${  H}$
and $  H_0$ being similar to each other. This means that $H$ and $H_0$ satisfy the following definition (see, e.g. , \cite{anttrap}):

\begin{defn} \label{def simile}{\rm Let $(V, D(V)),\,$ $(K,
D(K))$  be two linear operators in the Hilbert space $\Hil.$
We say that $V$ and $K$  are {\em similar}, and write $V\sim K$, if
there exists a bounded operator $T$ with bounded inverse $T^{-1}$
which intertwines $K$ and $V$ in the sense that $T:D(K)\rightarrow
D(V)$ and $VTf=TKf,$ for every $f\in D(K).$ }
\end{defn}

The bounded operator $T$ of definition \eqref{def simile} is called
a bounded intertwining operator for (or between) $V$ and $K$, \cite{anttrap}. Intertwining operators have been proved to be quite important in the construction of many exactly solvable quantum models, both when the Hamiltonian of the system considered is self-adjoint, \cite{intop}, and when is not, \cite{bagIO}.

{Before proceeding we fix some notation: given a linear operator $S$, with domain $D(S)$ we denote by $S\upharpoonright \G$ the restriction of $S$ to a subspace $ \G\subseteq D(S)$.
Moreover, we recall that, if $S$ is a closed operator, a subspace $\C\subseteq
D(S)$  is called a {\em core} for $S$ if the closure of
$\overline{S\upharpoonright \C}={S}.$} 
{For notations and basic definitions on operators and their spectra we refer to \cite[Ch. VIII]{reedsimon}.}

\begin{prop} \label{prop2} Let $({  H}_0,
D({  H}_0))$ and   $({  H},
D({  H}))$ be closed operators in $\Hil$, with $H_0$ selfadjoint. Let $\{e_n\}$ be an orthonormal basis consisting of eigevectors of $H_0$ and $\F_\varphi$ and
$\F_\psi$ be Riesz bases  as in \eqref{2.1} for which \eqref{211} holds. Assume that \be T{
H}_0e_n={  H}Te_n, \qquad {  H}\*
(T\*)^{-1}e_n=(T\*)^{-1}{  H}_0e_n.\en
{Then, the following statements are equivalent.
\begin{enumerate}
\item ${  H}\sim {  H}_0$, with intertwining operator $T$.
\item The linear span of $\{T e_n\}$, $\spn\{Te_n\}$,  is a core for $  H$.
\end{enumerate}
 }
\end{prop}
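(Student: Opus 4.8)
The plan is to reduce both statements to the single domain condition $T D(H_0)=D(H)$, after observing that the intertwining equality extends automatically from the basis vectors to the whole of $D(H_0)$.

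First I would record the elementary fact that $\spn\{e_n\}$ is a core for $H_0$: since $H_0$ is selfadjoint with orthonormal eigenbasis $\{e_n\}$, the truncations $f_N=\sum_{n\le N}\ip{e_n}{f}e_n$ of any $f\in D(H_0)$ satisfy $f_N\to f$ and $H_0 f_N\to H_0 f$, so $\overline{H_0\up\spn\{e_n\}}=H_0$. Next I would upgrade the hypothesis $T H_0 e_n = H T e_n$. By linearity it gives $HTf=TH_0 f$ for every $f\in\spn\{e_n\}$; for general $f\in D(H_0)$, approximating $f$ by such $f_N$ in the graph norm of $H_0$ and using that $T$ is bounded (so $Tf_N\to Tf$ and $HTf_N=TH_0f_N\to TH_0f$) together with the closedness of $H$ yields $Tf\in D(H)$ and $HTf=TH_0f$. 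Thus $T D(H_0)\subseteq D(H)$ and the intertwining identity holds on all of $D(H_0)$ unconditionally; consequently the only substantive content of statement (1), that is of $H\sim H_0$ via $T$, is the surjectivity $T D(H_0)=D(H)$.

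With this reduction, (1)$\,\Rightarrow\,$(2) is direct: given $TD(H_0)=D(H)$, any $g\in D(H)$ has the form $g=Tf$ with $f\in D(H_0)$, and choosing $f_N\in\spn\{e_n\}$ with $f_N\to f$, $H_0f_N\to H_0 f$, the vectors $Tf_N\in\spn\{Te_n\}$ satisfy $Tf_N\to g$ and $HTf_N=TH_0f_N\to TH_0 f=Hg$; hence $\spn\{Te_n\}$ is a core for $H$.

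For (2)$\,\Rightarrow\,$(1) I would run the argument backwards, and this is where the hypotheses really bite. Assume $\spn\{Te_n\}$ is a core for $H$ and fix $g\in D(H)$; pick $g_N\in\spn\{Te_n\}$ with $g_N\to g$ and $Hg_N\to Hg$. Setting $h_N:=T^{-1}g_N\in\spn\{e_n\}$ and using the intertwining relation on $\spn\{e_n\}$ in the form $H_0h_N=T^{-1}HTh_N=T^{-1}Hg_N$, the boundedness of $T^{-1}$ gives $h_N\to T^{-1}g$ and $H_0h_N\to T^{-1}Hg$. Since $H_0$ is closed, $T^{-1}g\in D(H_0)$, so $g\in TD(H_0)$; combined with $TD(H_0)\subseteq D(H)$ from the first step this yields $TD(H_0)=D(H)$, i.e. (1). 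The main obstacle, and the step that genuinely forces the use of both the core hypothesis and the closedness (selfadjointness) of $H_0$, is precisely this reverse inclusion $D(H)\subseteq TD(H_0)$; the forward inclusion and the intertwining come for free. I also note that the second assumed relation, $H\*(T\*)^{-1}e_n=(T\*)^{-1}H_0e_n$, is the dual statement for $H\*$ and is not needed for this particular equivalence.
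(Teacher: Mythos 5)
Your proof is correct and follows essentially the same route as the paper's: graph-norm approximation via the core property of $\spn\{e_n\}$ for $H_0$ (resp.\ of $\spn\{Te_n\}$ for $H$), combined with the boundedness of $T,T^{-1}$ and the closedness of $H$ (resp.\ $H_0$). Your reorganization --- first establishing unconditionally that $TD(H_0)\subseteq D(H)$ with $HTf=TH_0f$, so that both implications reduce to the surjectivity $TD(H_0)=D(H)$ --- is a slightly cleaner packaging of the same argument, and your observation that the adjoint relation $H\*(T\*)^{-1}e_n=(T\*)^{-1}H_0e_n$ is not used here is accurate.
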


\begin{proof}
 First of all, we notice that  $\spn\{e_n\}$
 is a core for ${  H}_0$, as is easy to check.

 We prove that 1 $\Rightarrow$ 2. Let $g \in D(H)$ and put $f:=T^{-1}g$. Since $f \in D(H_0)$, there exists a sequence $\{f_n\}\subset \spn\{e_n\}$ such that
 $f_n \to f$ and $H_0 f_n \to H_0 f$. We put $g_n=Tf_n$, $n \in {\mb N}$. Clearly $\{g_n\}\subset \spn\{Te_n\}$ and $g_n \to g$. Moreover, $Hg_n=HT f_n =TH_0f_n \to TH_0f=HTf=Hg$,
proves the statement.

 Now, we show that 2 $\Rightarrow$ 1.  Since $\spn\{e_n\}$
 is a core for $H_0$, for every $f\in
D({  H}_0),$ there exists  a sequence $\{f_k\}\subset \spn
\{e_n\}$ such that
 $f=\lim_n{f_k},$  ${  H}_0f=\lim_n{{  H}_0f_n}$
and it is clear that  $T{  H}_0f=\lim_k{T{
H}_0f_k}=\lim_k{ {H}Tf_k}.$ Since, by hypothesis, $Tf\in
D(  H),$ then $TD({  H}_0)\subseteq D(  H)$ and $T{
H}f={  H}Tf$ for every $f\in D({  H}_0).$ Moreover if
$g\in D(  H),$ by hypothesis, there exists $\{
g_k\}\subseteq \spn \{\vp_n \} $ such that \be g=\lim_k
g_k \quad \mbox{and} \quad {  H}g=\lim_k {  H}g_k.
\label{217d}\en If we put
$f:=T^{-1}g=\lim_kT^{-1}g_k:=f_k\in \spn\{e_n\}$ we have
$$\lim_k {  H}_0f_k=\lim_k{  H}_0T^{-1}g_k=\lim_kT^{-1}{  H}g_k={  H}g.$$
Hence $f\in D(\overline{{  H}_0\upharpoonleft \spn\{\vp_n\}})$
and ${  H}_0f={  H}g.$ We deduce that $Tf_k=g_k,$
$f_k\in \spn \{e_n\} $ which implies by \ref{217d} that ${
H}g=\lim_k {H  }g_k=\lim_k {  H}Tf_k=\lim_kT{{
H}_0}f_k$ and $\lim_k {  H}_0f_k=T^{-1}{  H}g.$ Hence
$\lim_kT^{-1}g_k=\lim_kf_k=f$ then $f\in D({  H}_0)$
and $\lim_k{  H}_0f_k=H_0f.$ Therefore
$$ T{  H}_0f=\lim_kT{  H}_0f_k=\lim_k{  H}Tf_k={  H}g={  H}Tf $$
\end{proof}


\begin{prop}
If ${  H} \sim {{  H}_0},$ {with intertwining operator $T$}, and  $\F_\varphi$ and $\F_\psi$ are
Riesz bases  then, \be
 D({  H})=\{g\in\Hil: \quad \sum_k |\lambda_k|^2 |\ip{g}{\psi_k}
|^2<\infty \}. \label{217a} \en
\end{prop}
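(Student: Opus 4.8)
The plan is to characterize $D(H)$ via the Riesz-basis expansion, using the similarity $H\sim H_0$ to transport the known domain description of the self-adjoint operator $H_0$ to $H$. Since $H_0$ is self-adjoint with orthonormal eigenbasis $\{e_n\}$ and eigenvalues $\lambda_n$, the spectral theorem gives the standard characterization $D(H_0)=\{f\in\Hil:\sum_k|\lambda_k|^2|\ip{f}{e_k}|^2<\infty\}$. I want to pull this back through $T$: since $T:D(H_0)\to D(H)$ is a bijection (because $H\sim H_0$ with bounded, boundedly invertible $T$, so $D(H)=TD(H_0)$), every $g\in D(H)$ is of the form $g=Tf$ with $f\in D(H_0)$, and conversely.

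The key computation is to rewrite the coefficients $\ip{f}{e_k}$ appearing in the $H_0$-domain condition in terms of $g=Tf$ and the dual Riesz basis $\F_\psi$. Here I would use the defining relations $\varphi_n=Te_n$ and $\psi_n=(T\*)^{-1}e_n$ from \eqref{2.1}. The crucial identity is
\be
\ip{g}{\psi_k}=\ip{Tf}{(T\*)^{-1}e_k}=\ip{f}{T\*(T\*)^{-1}e_k}=\ip{f}{e_k},
\en
which holds because $T\*(T\*)^{-1}=\Id$. Thus the two sequences of coefficients coincide exactly: $\ip{g}{\psi_k}=\ip{f}{e_k}$ for all $k$. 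Substituting into the spectral condition for $f\in D(H_0)$ immediately yields that the condition $\sum_k|\lambda_k|^2|\ip{g}{\psi_k}|^2<\infty$ is equivalent to $f\in D(H_0)$, i.e.\ to $g\in TD(H_0)=D(H)$.

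To finish I would argue both inclusions. If $g\in D(H)$, then $f=T^{-1}g\in D(H_0)$, so $\sum_k|\lambda_k|^2|\ip{f}{e_k}|^2<\infty$, and by the identity above this gives $\sum_k|\lambda_k|^2|\ip{g}{\psi_k}|^2<\infty$. Conversely, if $g\in\Hil$ satisfies the stated summability condition, then setting $f=T^{-1}g$ (which is well-defined and in $\Hil$ since $T^{-1}$ is bounded) the same identity shows $\sum_k|\lambda_k|^2|\ip{f}{e_k}|^2<\infty$, whence $f\in D(H_0)$ by the spectral characterization, and therefore $g=Tf\in TD(H_0)=D(H)$.

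The main obstacle, and the point deserving care rather than difficulty, is justifying that the similarity $H\sim H_0$ yields the exact equality $D(H)=TD(H_0)$ of domains, not merely the inclusion $TD(H_0)\subseteq D(H)$ built into Definition \ref{def simile}. The reverse inclusion follows from applying the definition symmetrically to the intertwining of $H_0$ and $H$ by $T^{-1}$ (so that $T^{-1}D(H)\subseteq D(H_0)$, giving $D(H)\subseteq TD(H_0)$); one should check that $T^{-1}$ indeed intertwines $H_0$ and $H$, which is immediate from $HTf=TH_0f$ upon applying $T^{-1}$ on both sides. Once this domain equality is in hand, the rest is the elementary coefficient identity above together with the spectral-theorem description of $D(H_0)$.
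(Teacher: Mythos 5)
Your proof is correct and follows essentially the same route as the paper's: identify $D(H)=TD(H_0)$, invoke the spectral characterization of the domain of the self-adjoint operator $H_0$, and move $T$ across the inner product via $\ip{T^{-1}g}{e_k}=\ip{g}{(T^{-1})\*e_k}=\ip{g}{\psi_k}$. One small remark: your attempt to derive the inclusion $D(H)\subseteq TD(H_0)$ from the intertwining relation alone is circular (applying $T^{-1}$ to $HTf=TH_0f$ only yields information on vectors already in $TD(H_0)$), but this is harmless here because the notion of similarity used in the paper already carries the domain equality $TD(H_0)=D(H)$, as the authors make explicit elsewhere.
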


\begin{proof} Indeed
\begin{eqnarray*}
D({{  H}}) &=&\{Tf; \quad f\in D({{  H}_0})\}= \{ g\in\Hil: \quad T^{-1}g\in D({{  H}_0}) \}\\
 &=& \{ g\in\Hil: \quad \sum_k |\lambda_k|^2 |\ip{T^{-1}g}{e_k} |^2<\infty  \}\\
 &=&\{ g\in\Hil: \quad \sum_k |\lambda_k|^2 |\ip{g}{(T^{-1})\*e_k} |^2<\infty  \}\\
 &=&\{ g\in\Hil: \quad \sum_k |\lambda_k|^2 |\ip{g}{\psi_k} |^2<\infty  \}.
\end{eqnarray*}
\end{proof}

Now is clear that, {for every $f' \in D(H)$ and $g' \in \Hil$, \be
\ip{T{{  H}_0}T^{-1}f'}{g'}=\ip{{{  H}_0}T^{-1}f'}{T\*g'}=\sum
\lambda_k \ip{P_kT^{-1}f'}{T\*g'} \label{217b} \en where
$P_k\zeta:=\ip{\zeta}{e_k}e_k, \; \zeta \in \Hil. $

Then, if we put $R_k:=TP_kT^{-1}$, it is easy to check that
$\ip{{  H}f'}{g'}=\sum \lambda_k \ip{R_kf'}{g'}.$ Then
$\{R_k\}$ is a (non-self-adjoint) resolution of the identity, {in the sense that $R_kR_j=\delta_{k,j}R_j$ and $\sum_{k}R_kf =f$, for every $f \in \Hil$.}

If $u$ is a {bounded} continuous function, one can define $\ip{u(
{H})f'}{g'}:=\sum_k u(\lambda_k)\ip{R_kf'}{g'} $, { $f', g' \in \Hil$}. Hence,
$$u({  H})=Tu({{  H}_0})T^{-1},$$ (for instance
$u({  H}):$=${  \e}^{i {{  H}}t}$=$T$ $\e^{i{ {  H}}_0t}
T^{-1}$).}

Now we set $$V(t):=T\e^{i {  H}_0t}T^{-1}, \quad t\in {\mathbb R}.$$ Then, \be V(0)=\1, \quad
\quad V(t+s)=V(t)V(s).\label{217c} \en Notice that $V(0)=V(t)V(-t),$ hence
$V(-t)=V(t)^{-1}.$  The boundedness of $T$ and $T^{-1}$ implies that  $ \| V(t)\|\leq  \|T  \|
\cdot\|T^{-1}  \|,$ for all
$t\in {\mathbb R}.$
{Hence $V(t)$ is a {\em  ({uniformly bounded}) one-parameter group of bounded operators}.

{\begin{prop} Let $\{ W(t), \, t\in {\mathbb R}\}$ be a  one-parameter
group of bounded operators 
such that $t\rightarrow W(t)$ is strongly
continuous.
Set $$D({{  H}}): =\left\{f\in\Hil:
\lim_{t\rightarrow 0}\frac{W(t)-1}{t}f \quad \mbox{ exists in
} \Hil\right\},$$
$$ {{  H}}(f):=\lim_{t\rightarrow 0}\frac{W(t)-1}{it}f. $$
Then, the following statements are equivalent.
\begin{enumerate}
\item There exists a bounded operator $T,$ with bounded inverse,
such that $\{ TW(t)T^{-1}:\, \, t\in {\mathbb R}\}$ is a unitary group.
\item {There exists a self-adjoint operator $K$ such that
${  H}$ $\sim {  K}$.}
\end{enumerate}
\end{prop}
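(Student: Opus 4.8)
The statement asserts the equivalence of two ways of recognizing that the group $W(t)$ is, up to a fixed bounded similarity, unitary; accordingly I would prove the two implications separately, using Stone's theorem as the bridge to self-adjointness.

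For $1\Rightarrow2$ the plan is direct. Assuming a bounded $T$ with bounded inverse such that $U(t):=TW(t)T^{-1}$ is a unitary group, I first note that $U$ is again a one-parameter group (from $W(t+s)=W(t)W(s)$) and that it is strongly continuous, since $t\mapsto W(t)h$ is continuous for every $h\in\Hil$ and $T,T^{-1}$ are bounded. Stone's theorem (see \cite{reedsimon}) then furnishes a self-adjoint operator $K$ with $U(t)=\e^{itK}$ and $Kf=\lim_{t\to0}\frac{U(t)-1}{it}f$ on its natural domain. It remains to exhibit an intertwiner; the candidate is $S:=T^{-1}$. For $f\in D(K)$ one computes $\frac{W(t)-1}{it}T^{-1}f=T^{-1}\frac{U(t)-1}{it}f\to T^{-1}Kf$, using $W(t)T^{-1}=T^{-1}U(t)$ and the boundedness of $T^{-1}$; hence $T^{-1}f\in D(H)$ and $HT^{-1}f=T^{-1}Kf$. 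Thus $S=T^{-1}$ maps $D(K)$ into $D(H)$ and intertwines, i.e. $H\sim K$.

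For $2\Rightarrow1$ I start from a self-adjoint $K$ and a bounded intertwiner $S$ (with bounded inverse) such that $S:D(K)\to D(H)$ and $HSf=SKf$ for every $f\in D(K)$. Set $U(t):=\e^{itK}$, the strongly continuous unitary group given by Stone, and form the conjugated group $\tilde W(t):=SU(t)S^{-1}$, a strongly continuous one-parameter group of bounded operators whose generator agrees with $H$ on $SD(K)$. The goal is to prove $W(t)=\tilde W(t)$; then $T:=S^{-1}$ yields $TW(t)T^{-1}=U(t)$, unitary. Rather than argue that $SD(K)$ is a core for $H$, I would fix $f\in D(K)$ and study the $\Hil$-valued function $\phi(t):=W(-t)SU(t)f$. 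Since $U(t)f\in D(K)$ one has $SU(t)f\in D(H)$, so $\phi$ is strongly differentiable; applying the product rule, using $\frac{d}{dt}U(t)f=iKU(t)f$, the commutation of $H$ with its own group, and the intertwining identity $HSU(t)f=SKU(t)f$, the two resulting terms $-iW(-t)SKU(t)f$ and $+iW(-t)SKU(t)f$ cancel, so $\phi'(t)\equiv0$. Hence $\phi(t)=\phi(0)=Sf$, that is $W(t)Sf=SU(t)f$ for all $f\in D(K)$; density of $D(K)$ and boundedness of all operators involved upgrade this to $W(t)S=SU(t)$, i.e. $U(t)=S^{-1}W(t)S$.

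The routine part is the Stone-theorem bookkeeping and the domain-and-limit computations in $1\Rightarrow2$. The crux is the reverse direction, where knowing only that the generators coincide on the subspace $SD(K)$ does not by itself force the two groups to agree. I expect the cancellation argument above (the constancy of $W(-t)SU(t)f$) to be the decisive and most delicate step, its justification resting on the strong differentiability of the product of a $C_0$-group with a $D(H)$-valued $C^1$ curve. An alternative would be to verify that $SD(K)$ is a core for $H$ and then invoke the uniqueness of a $C_0$-group with prescribed generator through a common point of the resolvent sets of $H$ and $SKS^{-1}$, but the differentiation route avoids the core hypothesis altogether.
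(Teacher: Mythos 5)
Your proof is correct, and the forward implication $1\Rightarrow2$ is essentially the argument the paper gives: apply Stone's theorem to the strongly continuous unitary group obtained by conjugation, then push the difference quotients through the bounded operator to identify the intertwiner (the paper writes $U(t)=T^{-1}W(t)T$ where the statement has $TW(t)T^{-1}$, a harmless relabelling $T\leftrightarrow T^{-1}$; it also records the reverse inclusion of domains, which Definition \ref{def simile} does not strictly require). Where you genuinely diverge is in $2\Rightarrow1$. The paper's proof of that direction sets $U(t)={\sf e}^{iKt}$ and simply \emph{defines} $W(t)=TU(t)T^{-1}$, observing that this is a strongly continuous one-parameter group; it never reconciles this conjugated group with the group $W(t)$ given in the hypotheses, i.e., it omits the uniqueness step showing that the originally given $C_0$-group coincides with $T{\sf e}^{iKt}T^{-1}$. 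You identified exactly this as the crux and closed it with the standard constancy argument: for $f\in D(K)$ the curve $\phi(t)=W(-t)SU(t)f$ is strongly differentiable (the product rule is legitimate because $W$ is locally uniformly bounded and $t\mapsto SU(t)f$ is a $C^1$ curve with values in $D(H)$), the intertwining relation makes $\phi'\equiv 0$, and density of $D(K)$ together with boundedness upgrades $W(t)Sf=SU(t)f$ to $W(t)S=SU(t)$. This buys a complete proof where the paper's is only a sketch, and it avoids having to argue that $SD(K)$ is a core for $H$. One cosmetic remark: the ``commutation of $H$ with its own group'' is not actually needed, since writing $W(-t-h)-W(-t)=W(-t)\left(W(-h)-1\right)$ reduces the relevant term of the difference quotient to the defining limit for $H$ evaluated at the point $SU(t)f\in D(H)$.
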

}
\begin{proof}
$1\Rightarrow 2$. Let $U(t):=T^{-1}W(t)T, \, t\in {\mathbb R}$ be
a one-parameter unitary group generated by {the self-adjoint operator $K$ defined as  follows}
$${  K}:=\lim_{t\rightarrow 0}\frac{U(t)-1}{it}f, \quad\quad
D({  K}):=\left\{f\in\Hil: \lim_{t\rightarrow 0}\frac{U(t)-1}{t}f \quad
\mbox{  exists in } \Hil\right\}.$$ Then $TD({  K})=D({  H})$ and ${  H}Tf=T{  K}f,$
for every $f\in D({  K}).$

Indeed,  for every $f\in\D({  K})$
$$Kf=\lim_{t\rightarrow 0}\frac{U(t)-1}{it}f=\lim_{t\rightarrow 0}\frac{T^{-1}W(t)T-T^{-1}T}{it}f=T^{-1}\lim_{t\rightarrow 0}\frac{W(t)Tf-Tf}{it}.$$
Then, $Tg\in\D({  H})$ and ${  H} T f=T{  K}f.$

Conversely, if
$h\in\D({  H})$

$${  H}h=\lim_{t\rightarrow 0}\frac{W(t)-1}{it}h=\lim_{t\rightarrow 0}\frac{T[W(t)-1]T^{-1}}{it}f=T\lim_{t\rightarrow 0}\frac{W(t)-1}{it}T^{-1}h.$$
Then, $T^{-1}h\in D({  K})$; i.e., $h\in TD({  K})$ and ${  H}h=T{  K}T^{-1}h.$
Moreover ${  H}Tf=T{  K} f$, for every $f\in D({  K}).$

{$2\Rightarrow 1$. Assume that $H\sim K$, with $K$ self-adjoint and intertwining operator $T$ (i.e., $TD(K)=D(H)$, $HTf=TKf$, for every $f \in D(K)$). Let $U(t)={\sf e}^{iKt}$, $t \in {\mb R}$ and define
$W(t)= TU(t)T^{-1}$ $t \in {\mb R}$. Then it is easily seen that $W(t)$ is a strongly continuous one-parameter group satisfying \eqref{217c}.
}
\end{proof}

%

{The
assumption ${  H}Te_n=T{  H_0}e_n$ implies that ${  H}T
e_n= {\lambda_n}Te_n.$ {Hence, for the set of eigenvalues (i.e. the point spectra), we get  $\sigma_p({
H_0})\subseteq\sigma_p({  H})$ and since, by assumption, the whole spectrum $\sigma(H_0)$ coincides with $\sigma_p(H_0)$}, we conclude that $\sigma(H_0) \subseteq \sigma_p(H) \subseteq \sigma(H)$.
If $H\sim H_0$, the converse inclusions hold. Indeed, in that case, as shown in \cite[Proposition 3.1]{anttrap}
similarity preserves not only the spectrum set $\sigma({
H})=\sigma({  H_0})$ but also the parts in which the spectrum is
traditionally decomposed: the point spectrum $ \sigma_p({
H})=\sigma_p({  H_0}) $, the continuous spectrum $\sigma_c({
H})=\sigma_c({  H_0})$ and the residual spectrum $\sigma_r({
H})=\sigma_r({  H_0}).$}

{Therefore, if
$\F_\varphi$ and $\F_\psi$ are Riesz bases  satisfying the assumption of Proposition \ref{prop2} and if
the linear span of $\{\vp_n\}$ is a core for $  H$,
then $\sigma({  H_0})=\sigma_p({  H_0})=\sigma_p({  H})=\sigma({  H}).$






Of course, what we have obtained here for $H$ can also be deduced for $H\*$, under similar assumptions. We postpone the analysis of the relation between $\sigma(H_0)$, $\sigma(H)$ and $\sigma(H\*)$ when $T$ or $T^{-1}$ is unbounded to a future paper.

\section{Generalizing Gibbs states}

Let us now go back to the Gibbs states introduced in (\ref{22a}).
  The properties of
$\omega_0$ are well known in the literature, both from a
mathematical and from a physical side. In particular, $\omega_0$
is linear, normalized, continuous and positive:
$\omega_0(X\* X)\geq 0$, {for every $X\in B(\Hil)$},and in particular $\omega_0(X\* X)= 0$  only when
$X=0$.

 Moreover, if we define the following standard Heisenberg
time evolution\footnote{The reason why we use the word standard
here is because in the following we will propose different
definitions for the time evolution of an operator, useful when
the dynamics is driven by some non self-adjoint Hamiltonian, as in our case.} on
$B(\Hil)$ 
by \be
\alpha_0^t(X):=\e^{i{  H_0}t}X\e^{-i{  H_0}t},\quad X\in B(\Hil), \label{add1}\en
it follows that $\alpha_0^t(X)\in B(\Hil)$, {for every $t\in {\mathbb R}$}, and $\omega_0$  satisfies the following
equality: \be
\omega_0\left(A\alpha_0^{i\beta}(B)\right)=\omega_0(BA),\quad \forall A,B\in B(\Hil)
\label{23}.\en  The abstract version of
this equality is known in the literature as the {\em KMS
{condition}}, \cite{br2}, and it is used to analyze  {physical aspects of the system under investigation like, for instance,} its phase
transitions  {and its thermodynamical equilibria}. The parameter $\beta$, as already stated, is interpreted as the inverse
temperature of the system.

{What we are interested in here is the possibility} of defining {states of the type} $\omega_0$ {by using} the sets $\F_\varphi$ and $\F_\psi$, rather than $\F_e$, and by replacing $H_0$ with $H$ or with $H^*$, see (\ref{21}).

{We will not suppose in this section that $\F_\varphi$ and $\F_\psi$ are necessarily Riesz bases, except when explicitly stated, but only that they are biorthogonal sets. {We maintain the assumption that they are families of eigenvectors of $H$ and $H^*$, respectively, as in \eqref{21}. This more general situation is relevant in pseudo-hermitian quantum mechanics, \cite{baginbagbook}, and for this reason we believe it can be useful in concrete applications.} For the sake of simplicity, we assume that the eigenvalues of $H_0$, $H$ and $H\*$,
are equal to the sequence $\{n\}$ of natural numbers. The results of this paper remain mostly valid if the eigenvalues constitute a more general sequence $\{\lambda_n\}$, as in \eqref{211}, with just obvious modifications. }

We start with defining the following functionals:
\be \left\{\begin{array}{l}
\omega_{\varphi\varphi}(X)=\frac{1}{Z_{\varphi\varphi}}\sum_n\left<{  \e}^{-\beta {  H}}\varphi_n,X\varphi_n\right>=\frac{1}{Z_{\varphi\varphi}}\sum_n{  \e}^{-\beta n}\left<\varphi_n,X\varphi_n\right>,\\
{}\\
\omega_{\psi\psi}(X)=\frac{1}{Z_{\psi\psi}}\sum_n\left<{\sf e}^{-\beta {
H}\*}\psi_n,X\psi_n\right>=\frac{1}{Z_{\psi\psi}}\sum_n{\sf e}^{-\beta n}\left<\psi_n,X\psi_n\right>,\end{array}\right.
\label{24}\en where $X$,

 for the time being, is just an operator on $\Hil$ such that the right-hand sides above both converge, and $Z_{\varphi\varphi}=\sum_n{  \e}^{-\beta
n}\|\varphi_n\|^2$ and $Z_{\psi\psi}=\sum_n{  \e}^{-\beta
n}\|\psi_n\|^2$. Our main assumption on $\F_\varphi$ and $\F_\psi$, other than their completeness, which will always assumed all throughout the paper, and which is always satisfied in concrete examples in our knowledge,  is that also the series defining the normalizations above do
converge:

\begin{defn}{\rm The biorthogonal sets $\F_\varphi$ and $\F_\psi$ are called {\em well-behaved} if $Z_{\varphi\varphi}<\infty$ and $Z_{\psi\psi}<\infty$.}
\end{defn}

\vspace{2mm}

{\bf Remarks--} (1) It is easy to check that any two biorthogonal
Riesz bases $\F_\varphi$ and $\F_\psi$ are well-behaved. This implies, in
particular, that any o.n. basis $\F_e=\{e_n, \,n\geq0\}$ produces a
pair $(\F_\varphi=\F_e,\F_\psi=\F_e)$ of well-behaved biorthogonal sets (WBBS, from
now on). It is also not hard to find more examples of
biorthogonal sets $\F_\varphi$ and $\F_\psi$ which are not Riesz
bases, but still are well-behaved. For instance, if
$\{c_n,\,n\geq0\}$ is a sequence of real numbers such that $\sum_nc_n^{\pm 2}\e^{-\beta n}<\infty$, and if $\F_e$ is an o.n.
basis, then defining $\varphi_n=c_n e_n$ and
$\psi_n=\frac{1}{c_n}\,e_n$, the sets $\F_\varphi$ and $\F_\psi$
are biorthogonal and well behaved, even if they are not
Riesz-bases, which is what happens if $\{c_n\}$ or $\{c_n^{-1}\}$ diverges with $n$. As a concrete example, taking $c_n=1+n$ we have
$\F_\varphi=\{\varphi_n=(1+n)e_n,\,n\geq0\}$ and
$\F_\psi=\{\psi_n=\frac{1}{1+n}\,e_n,\,n\geq0\}$ are well-behaved.
Hence, the set of WBBS is rather rich.

\vspace{1mm}

{(2) In the situation described by \eqref{211}
 we will call well-behaved  the sets $\F_\varphi$ and
$\F_\psi$ if they satisfy $\sum_n{  \e}^{-\beta
\lambda_n}\|\varphi_n\|^2<\infty$ and $\sum_n{  \e}^{-\beta
\lambda_n}\|\psi_n\|^2<\infty$.}

\vspace{2mm}

From now on we will always consider, when not explicitly stated,
that the sets $\F_\varphi$ and $\F_\psi$ used to define
$\omega_{\varphi\varphi}$ and $\omega_{\psi\psi}$ are
well-behaved.  We call $\A_\varphi$ (resp. $\A_\psi$), the set of all the operators on $\Hil$, not necessarily bounded, such that $\sum_n {  \e}^{-\beta n}\mid\ip{\vp_n}{X\vp_n}\mid<\infty$ (resp. $\sum_n{  \e}^{-\beta
n}\mid\ip{\psi_n}{X\psi_n}\mid<\infty$).

It is possible to see that, since $\F_\varphi$ and $\F_\psi$ are
WBBS, each bounded operator  {$A\in B(\Hil)$} belongs to both $\A_\varphi$ and
$\A_\psi$. Indeed we have, for instance,
$$
|\omega_{\varphi\varphi}(A)| \leq \frac{1}{Z_{\varphi\varphi}}\sum_n{  \e}^{-\beta n}|\left<\varphi_n,A\varphi_n\right>|\leq \frac{1}{Z_{\varphi\varphi}}\sum_n\,{  \e}^{-\beta n}\|\varphi_n\|\|A\varphi_n\|\leq \|A\|,
$$ and in the same way $|\omega_{\psi\psi}(A)|\leq  \|A\|$, for all such $A$'s.
This means, incidentally, that $\omega_{\varphi\varphi}$ and
$\omega_{\psi\psi}$ are both continuous on $B(\Hil)$. Furthermore,
it is easy to understand that $\A_\varphi$ and $\A_\psi$ are
"larger than" $B(\Hil)$ itself. To understand that this is really the
case, let us consider the case in which the so-called {\em metric
operators} $S_\varphi$ and $S_\psi$ are unbounded,
\cite{baginbagbook}, as it happens  {quite often in concrete physical situations}. These operators are defined as follows:
$$
D(S_\varphi)=\left\{f\in\Hil:\, \sum_{ n} \left<\varphi_{n},f\right>\,\varphi_{ n} \mbox{ exists in }  \Hil \right\},\, \mbox{ and }\, S_\varphi f=\sum_{ n} \left<\varphi_{n},f\right>\,\varphi_{ n}
$$
for all $f\in D(S_\varphi)$, and, similarly,
$$
D(S_\psi)=\left\{h\in\Hil:\, \sum_{ n} \left<\psi_{n},h\right>\,\psi_{ n} \mbox{ exists in }  \Hil \right\},\, \mbox{ and }\, S_\psi h=\sum_{ n} \left<\psi_{n},h\right>\,\psi_{ n},
$$
 for all $h\in D(S_\psi)$. We first observe that, see \cite{baginbagbook}, $D(S_\varphi)$ and $D(S_\psi)$ are both dense in $\Hil$ if $\F_\varphi$ and $\F_\psi$ are complete. Moreover, if $\F_\varphi$ and $\F_\psi$ are biorthogonal Riesz bases, then both $S_\varphi$ and $S_\psi$ are even bounded, so that $D(S_\varphi)=D(S_\psi)=\Hil$.
 However, when $\F_\varphi$ and $\F_\psi$ are not Riesz bases (but they are well-behaved and complete),
 $S_\varphi$ and $S_\psi$ are unbounded, and still they are such that $S_\varphi\in \A_\psi$ and
 $S_\psi\in\A_\varphi$. The conclusion is that $\omega_{\varphi\varphi}$ and $\omega_{\psi\psi}$ are defined not only on $B(\Hil)$, but also on larger sets. Moreover, they are both linear, normalized (i.e.
$\omega_{\varphi\varphi}(\1)=\omega_{\psi\psi}(\1)=1$) and
positive on $B(\Hil)$ (i.e $\omega_{\varphi\varphi}(X\*
X)\geq0$ and $\omega_{\psi\psi}(X\* X)\geq0$, for all $X\in
B(\Hil)$). Also, if $\F_\varphi$ is complete,
$\omega_{\varphi\varphi}(X\* X)=0$ implies that $X=0$, and
if $\F_\psi$ is complete, $\omega_{\psi\psi}(X\* X)=0$
implies that $X=0$. Of course, $\omega_{\varphi\varphi}$
  and
$\omega_{\psi\psi}$ satisfy all the properties of states on
$C\*$-algebras, see \cite{br1}. For example, the Cauchy-Schwarz
inequalities for these states look as follows:
$$
\left|\omega_{\varphi\varphi}(A\* B)\right|\leq \omega_{\varphi\varphi}(A\* A)\omega_{\varphi\varphi}(B\* B),\qquad \left|\omega_{\psi\psi}(A\* B)\right|\leq \omega_{\psi\psi}(A\* A)\omega_{\psi\psi}(B\* B),
$$
for all $A,B\in B(\Hil).$

Let us now introduce the following quantities:
\be
I_\varphi(X):=\max\left\{\sum_n{  \e}^{-\beta n}\|\varphi_n\|\|X\varphi_n\|,\sum_n{  \e}^{-\beta n}\|\varphi_n\|\|X\*\varphi_n\|\right\}
\label{25}\en
and
\be
I_\psi(X):=\max\left\{\sum_n{  \e}^{-\beta n}\|\psi_n\|\|X\psi_n\|,\sum_n{  \e}^{-\beta n}\|\psi_n\|\|X\*\psi_n\|\right\},
\label{26}\en
and let $B_\varphi$ (resp. $B_\psi$) be the set of all the operators $X$ on $\Hil$, bounded or not, such that $I_\varphi(X)<\infty$ (resp. $I_\psi(X)<\infty$).

These sets both surely include $B(\Hil)$, as can be easily
checked. Also, they are larger than $B(\Hil)$, since, even if  $S_\varphi$ is not bounded, it still belongs to $B_\psi$.
In fact, since $I_\psi(S_\varphi)=\sum_n\e^{-\beta
n}\|\psi_n\|\|\varphi_n\|$, using the Cauchy-Schwarz inequality we find {
$$
I_\psi(S_\varphi)=\sum_n\e^{-\beta
n/2}\|\psi_n\|\e^{-\beta
n/2}\|\varphi_n\|\leq \left(\sum_n\e^{-\beta
n}\|\psi_n\|^2\right)^{1/2}\left(\sum_n\e^{-\beta
n}\|\varphi_n\|^2\right)^{1/2}<\infty,
$$
since $\F_\varphi$ and $\F_\psi$ are well-behaved.}
Hence $S_\varphi\in B_\psi$. Analogously,  it
could be checked that $S_\psi\in B_\varphi$, either if  $S_\psi$ is bounded, or not.

Moreover, $B_\varphi\subseteq \A_\varphi$ and $B_\psi\subseteq \A_\psi$. In fact, if $A\in B_\varphi$, then
$$
|\omega_{\varphi\varphi}(A)|=\frac{1}{Z_{\varphi\varphi}}\left|\sum_n {  \e}^{-\beta n}\left<\varphi_n,A\varphi_n\right>\right|\leq \frac{1}{Z_{\varphi\varphi}}\sum_n {  \e}^{-\beta n}|\left<\varphi_n,A\varphi_n\right>|\leq \frac{1}{Z_{\varphi\varphi}} I_\varphi(A),
$$
so that $A$ also belongs to $\A_\varphi$. Similarly one can check
the other inclusion. Incidentally we observe that, because of our previous result,
we are recovering here the fact that $S_\varphi\in \A_\psi$  and  $S_\psi\in
\A_\varphi$. The reason why $B_\varphi$ and $B_\psi$ are introduced is because $\A_\varphi$ and $\A_\psi$ are not, in general, ideals for $B(\Hil)$. However, the following result can be proved, which is a weaker version of $\A_\varphi$ and $\A_\psi$ being ideals:

\begin{prop} If  $\F_\varphi$ and $\F_\psi$  are WBBS then:

 {1. if $A\in B(\Hil)$ and $X\in B_\varphi$,
then $AX, XA\in \A_\varphi$;

2.  if $A\in B(\Hil)$ and  $X\in B_\psi$,
then $ AX, XA\in \A_\psi$.}

\end{prop}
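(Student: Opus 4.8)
The plan is to verify each membership directly from the defining summability condition of $\A_\varphi$ (resp.\ $\A_\psi$), using the boundedness of $A$ to peel off a factor $\|A\|$ and thereby reduce the relevant series to one of the two series already controlled by $I_\varphi$ (resp.\ $I_\psi$). Since $\F_\varphi$ and $\F_\psi$ play perfectly symmetric roles, I would carry out statement~1 in full and then obtain statement~2 verbatim by replacing $\vp_n$ with $\psi_n$ and $B_\varphi,I_\varphi,\A_\varphi$ with $B_\psi,I_\psi,\A_\psi$ throughout.

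For the product $AX$ with $A\in B(\Hil)$ and $X\in B_\varphi$, the natural move is to shift $A$ into the left slot of the inner product: for each $n$,
\be
|\ip{\vp_n}{AX\vp_n}| = |\ip{A\*\vp_n}{X\vp_n}| \le \|A\*\vp_n\|\,\|X\vp_n\| \le \|A\|\,\|\vp_n\|\,\|X\vp_n\|,
\en
using Cauchy--Schwarz together with $\|A\*\vp_n\|\le\|A\|\,\|\vp_n\|$. Summing against $\e^{-\beta n}$ gives $\sum_n \e^{-\beta n}|\ip{\vp_n}{AX\vp_n}| \le \|A\|\sum_n \e^{-\beta n}\|\vp_n\|\,\|X\vp_n\| \le \|A\|\,I_\varphi(X)<\infty$, hence $AX\in\A_\varphi$. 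Note this step only requires $\vp_n\in D(X)$, which is guaranteed by $I_\varphi(X)<\infty$, and it consumes only the first of the two series inside the maximum defining $I_\varphi$.

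For $XA$ the estimate is symmetric, but one should instead transfer $X$: read the pairing through the adjoint as $\ip{\vp_n}{XA\vp_n}=\ip{X\*\vp_n}{A\vp_n}$ and estimate
\be
|\ip{X\*\vp_n}{A\vp_n}| \le \|X\*\vp_n\|\,\|A\vp_n\| \le \|A\|\,\|\vp_n\|\,\|X\*\vp_n\|,
\en
so that $\sum_n \e^{-\beta n}|\ip{\vp_n}{XA\vp_n}| \le \|A\|\sum_n \e^{-\beta n}\|\vp_n\|\,\|X\*\vp_n\| \le \|A\|\,I_\varphi(X)<\infty$, giving $XA\in\A_\varphi$. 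This estimate uses precisely the second series inside $I_\varphi$, which is why both $\|X\vp_n\|$ and $\|X\*\vp_n\|$ are bundled into a single quantity.

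The only genuinely delicate point, and the one I expect to be the main obstacle, is a domain subtlety concealed in the second product: when $X$ is unbounded one has no a priori guarantee that $A\vp_n\in D(X)$, so the symbol $XA\vp_n$ need not be literally defined. The way around this is to adopt the weak interpretation $\ip{\vp_n}{XA\vp_n}:=\ip{X\*\vp_n}{A\vp_n}$, which is meaningful because $X\in B_\varphi$ forces $\|X\*\vp_n\|<\infty$, i.e.\ $\vp_n\in D(X\*)$ for every $n$. Once this convention is fixed the two displayed estimates are routine, and statement~2 follows by the indicated substitution.
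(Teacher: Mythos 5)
Your proof is correct and follows essentially the same route as the paper's: the paper's own argument bounds $|\ip{\varphi_n}{AX\varphi_n}|\leq \|A\|\,\|\varphi_n\|\,\|X\varphi_n\|$ and sums against $\e^{-\beta n}$ to obtain $\|A\|\,I_\varphi(X)<\infty$, treating only the $AX$ case explicitly and dismissing the remaining cases as ``similar''. Your explicit handling of $XA$ through the second series in $I_\varphi$, and in particular your observation that for unbounded $X$ the pairing $\ip{\varphi_n}{XA\varphi_n}$ must be read weakly as $\ip{X\*\varphi_n}{A\varphi_n}$ (legitimate since $X\in B_\varphi$ forces $\varphi_n\in D(X\*)$), makes precise a domain point the paper passes over in silence.
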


\begin{proof}

Let us take $A\in B(\Hil)$ and $X\in B_\varphi$. Then
$I_\varphi(X)<\infty$. Hence, since
$$|\left<\varphi_n,AX\varphi_n\right>|\leq \|A\*\|\|\varphi_n\|\|X \varphi_n\|,
$$
we deduce that
$$
\left|\omega_{\varphi\varphi}(AX)\right|\leq
\frac{1}{Z_{\varphi\varphi}}\,\sum_n\e^{-\beta
n}|\left<\varphi_n,AX\varphi_n\right>|\leq
\frac{\|A\|}{Z_{\varphi\varphi}}I_\varphi(X)<\infty,
$$
which implies that $AX\in \A_\varphi$.The
other statements can be proved in a similar way.

\end{proof}

So far, we have only assumed that $\F_\varphi$ and $\F_\psi$ are well-behaved. Let us now discuss what happens if they are related to an o.n. basis $\F_e=\{e_n\}$ via some, in general unbounded, operator. More explicitly, we assume here that $\varphi_n=Te_n$ and $\psi_n=(T\*)^{-1}e_n$, as in (\ref{2.1}), but at least one between $T$ and $T^{-1}$ is unbounded.  In this case, see \cite{baginbagbook}, the sets $\F_\varphi$ and $\F_\psi$ might still be $\D$-quasi bases, for some dense subset $\D$ of $\Hil.$ This means that for all $f, g\in \D$, the following equalities  hold: \be
\left<f,g\right>=\sum_{n\geq0}\left<f,\varphi_n\right>\left<\psi_n,g\right>=\sum_{n\geq0}\left<f,\psi_n\right>\left<\varphi_n,g\right>.
\label{29b} \en
Then, it is convenient to assume that $\D$ is  stable under the action of both $T$ and $T^{-1}$, and that $e_n\in\D$ for all $n$. Hence $\varphi_n, \psi_n\in\D$ as well. This condition is satisfied in several concrete situations, \cite{baginbagbook}. In this case  we deduce
 \be
 \omega_{\varphi\varphi}(X)=\frac{Z_0}{Z_{\varphi\varphi}}\,\omega_0\left(T\* X T\right), \qquad  \omega_{\psi\psi}(Y)=\frac{Z_0}{Z_{\psi\psi}}\,\omega_0\left( T^{-1} Y( T^{-1})\*\right),
\label{29c}\en
for all $X\in\A_\varphi$ and $Y\in\A_\psi$. If both $T$ and $T^{-1}$ are bounded, i.e. when $\F_\varphi$ and $\F_\psi$ are Riesz bases, from (\ref{29c}) we can deduce the following inequalities:
$$
\omega_{\varphi\varphi}(X)\geq\frac{1}{\|T\|^2}\,\omega_0\left(T\* X T\right), \qquad \omega_{\psi\psi}(Y)\geq\frac{1}{\|T^{-1}\|}\,\omega_0\left( T^{-1} Y( T^{-1})\*\right),
$$
for all $X\in\A_\varphi$ and $Y\in\A_\psi$.

\subsection{ {The dynamics and the KMS-condition}}\label{sectIII1}

Equations in (\ref{29c}) show that $\omega_{\varphi\varphi}$ and
$\omega_{\psi\psi}$ can be related to $\omega_0$. Since we know
that this state satisfies the KMS-condition (\ref{23}), we now
investigate if some (generalized version of the) KMS-relation is also
satisfied by our states. In this section, we will always assume
that (\ref{2.1}) is satisfied and that $T, T^{-1}\in B(\Hil).$ Hence we are dealing with Riesz bases. The starting point of our
analysis are the following relations:
$$
T{{  H}_0}e_n={  H}Te_n, \qquad {  H}\*
(T\*)^{-1}e_n=(T\*)^{-1}{{  H}_0}e_n,
$$
for all $n$, which imply also that, for all complex $\gamma$,

\be T{\sf  e}^{\gamma {   {H_0}}}e_n={\sf  e}^{\gamma {
H}}Te_n, \qquad {\sf  e}^{\gamma {  H\*}}
(T\*)^{-1}e_n=(T\*)^{-1}{\sf  e}^{\gamma {  H_0}}e_n,
\label{210}\en for all $n$. Similar relations were already deduced in Section II. Of course, these equalities can be
extended to the linear span of the $e_n$'s, which is dense in
$\Hil$. How it has been extensively discussed in \cite{bag1,bag2},
in presence of non self-adjoint Hamiltonians it is not clear how
the {\em quantum dynamics} of a given system should be defined.
Recalling that the dynamics is one of the main ingredient of the
KMS-condition, it is clear that we have to face with this problem
also here. Natural possibilities which extend that in (\ref{add1})
are the following
\be
\alpha_\varphi^t(X)={\sf  e}^{it{  H}}X{\sf  e}^{-it{  H}},
\qquad \alpha_\psi^t(X)={\sf  e}^{it{  H\*}}X{\sf e}^{-it{
H\*}},
\label{add2}\en
for some $X\in\A$, see \cite{bag1,bag2}. These are two different, and both absolutely
reasonable, definitions of the {\em time evolution} of the operator
$X$. However, it is evident that these definitions present some
problems. First of all, since ${  H}$ and ${  H}\*$ are not
self-adjoint and since they are, quite often, unbounded, their exponentials
should be properly defined. Moreover,
in general, domain problems clearly occur: even if $f\in D({\sf e}^{-it{  H}})$, and $X\in B(\Hil)$ it is not guaranteed that
$X{  \e}^{-it{  H}}f\in D({  \e}^{it{  H}})$, in fact.

For this reason it is convenient to define $\alpha_\varphi^t(X)$
in the following alternative way: \be \alpha_\varphi^t(X):=T\alpha_0^t(T^{-1}XT)T^{-1},
\label{2111}\en for all $X\in B(\Hil)$. It is
clear that the right hand side of this equation is well defined, since only bounded operators are involved here.
It is interesting to notice that $\alpha_\varphi^t$ has all the
nice properties of a dynamics, \cite{br1,sew}. In particular,
$$
\alpha_\varphi^t(\1)=\1, \quad \alpha_\varphi^t(XY)=\alpha_\varphi^t(X)\alpha_\varphi^t(Y), \quad \alpha_\varphi^{t+s}(X)=\alpha_\varphi^t\left(\alpha_\varphi^s(X)\right),\quad \alpha_\varphi^0(X)=X,
$$
for all $X,Y\in B(\Hil)$ and for all $t,s\in
{\mathbb R}$. Moreover, $B(\Hil)$ is stable under the action of
$\alpha_\varphi^t$, which is also invertible:
$\left(\alpha_\varphi^t\right)^{-1}=\alpha_\varphi^{-t}$. It is
also easy to compute the generator of $\alpha_\varphi^t$, and one
can check that this generator is ${  H}=T{  H_0}T^{-1}$. In
fact, since $
\lim_{t,0}\frac{1}{t}\left(\alpha_0^t(X)-X\right)=i[{  H_0},X]$
for all bounded $X$, with similar computations to those of Section II, we can deduce that

$$
\lim_{t,0}\frac{\alpha_\varphi^t(X)-X}{t}=\lim_{t,0}\left[T\left(\frac{\alpha_0^t(T^{-1}XT)-T^{-1}XT}{t}\right)T^{-1}\right]=
$$
$$
=T\left[\lim_{t,0}\left(\frac{\alpha_0^t(T^{-1}XT)-T^{-1}XT}{t}\right)\right]T^{-1}=iT[{
H_0},T^{-1}XT]=i[T{  H_0}T^{-1},X].
$$

\vspace{2mm}

Going back to \eqref{2111}, from \eqref{210} it follows that, on a
dense domain, $Te^{\pm it{  H_0}}=e^{\pm it{  H}} T$, so that,
 {under the same stability conditions on $\D$ required just before Section \ref{sectIII1}, we have  \begin{eqnarray*} T\alpha_0^t(T^{-1}XT)T^{-1}f&=& T{\sf e}^{it{  H_0}}T^{-1}XT{  \e}^{-it{   H_0}}T^{-1}f\\&=&{\sf e}^{it{  H}}TT^{-1}XTT^{-1}{  \e}^{-it{  H}}f={  \e}^{it{
H}}X{  \e}^{-it{  H}}f\in\D,
\end{eqnarray*} for all $f\in\D$,} so that we go back to the natural definition of
the dynamics proposed formally in (\ref{add2}). Now, using the properties of the
trace, we deduce that \be
\omega_{\varphi\varphi}(BA)=\frac{1}{Z_{\varphi\varphi}}tr\left({\sf e}^{-\beta {  H}}TT\* BA\right), \label{212}\en while \be
\omega_{\varphi\varphi}(A\alpha_\varphi^{i\beta}(B))=\frac{1}{Z_{\varphi\varphi}}tr\left({\sf e}^{-\beta H}BTT\* A\right), \label{213}\en for all $A,B\in
B(\Hil)$. Therefore, if $B$ commutes with
$TT\*$, $[B,TT\*]=0$, then \be
\omega_{\varphi\varphi}(BA)=\omega_{\varphi\varphi}(A\alpha_\varphi^{i\beta}(B)).
\label{214}\en It is interesting to notice that the role of $A$ in
the relevant assumption for (\ref{214}) to hold is absolutely not
relevant. We also observe that, in case we have $T=\1$, everything collapses to
the standard situation described at the beginning of Section II. This is because, in this case, $\varphi_n=\psi_n=e_n$, and $H_0=H=H^*$.

It is also interesting to rewrite what we have deduced in a different form. Calling $B_T:=(TT\*)^{-1}B(TT\*)$, which is surely well-defined and bounded, the state $\omega_{\varphi\varphi}$ satisfies the following equation:
\be
\omega_{\varphi\varphi}(A\alpha_\varphi^{i\beta}(B))=\omega_{\varphi\varphi}(B_TA),
\label{215}\en
which of course returns (\ref{214}) if $[B,TT\*]=0$. The conclusion is therefore that $\omega_{\varphi\varphi}$ satisfies a KMS-like condition with respect to $\alpha_\varphi^t$, but with two different (but deeply related) operators, $B$ and $B_T$. Again, the role of $A$ seems to be not so important.

\vspace{2mm}

Similar computations and similar considerations can be repeated
for $\omega_{\psi\psi}$. In this case we put \be
\alpha_\psi^t(X)=(T\*)^{-1}\alpha_0^t\left(T^{*}X(T\*)^{-1}\right)T^*,
\label{216}\en for all $X\in B(\Hil)$. Using (\ref{210}) this is formally equal to $e^{itH\*}Xe^{-itH\*}$. In this
case we can prove that \be
\omega_{\psi\psi}(A\alpha_\psi^{i\beta}(B))=\omega_{\psi\psi}({_TB}A),
\label{217}\en for all $A,B\in B(\Hil)$. Here
we have defined ${_TB}=(TT\*)B(TT\*)^{-1}$. Of course,
if $[B,TT\*]=0$, we get ${_TB}=B$ and we can repeat the same
considerations as above. Moreover, similar conclusions about the
nature of $\alpha_\psi^t$ as an algebraic dynamics can be
repeated, repeating similar computations as those we have sketched for
$\alpha_\varphi^t$. For instance, $\alpha_\psi^t(\1)=\1$ and $\alpha_\psi^t(XY)=\alpha_\psi^t(X)\alpha_\psi^t(Y)$, for all $X,Y\in B(\Hil)$.

\vspace{2mm}

{\bf A simple example:--} Let $a$ be the bosonic operator satisfying, in the sense of unbounded operators, the commutation rule $[a,a^*]=\1$. Let $H_0$ be the self-adjoint Hamiltonian obtained by taking the closure of $a^*a$. If $e_0$ is the vacuum of $a$, $ae_0=0$, we can define new vectors $e_n=\frac{(a^*)^n}{\sqrt{n!}}\,e_0$, $n\geq1$. Hence, as it is discussed in any textbooks in quantum mechanics, the set $\F_e=\{e_n\}$ is an o.n. basis for the Hilbert space $\Hil=\Lc^2(\Bbb R)$. For any orthogonal projection operator $P$, $P=P^*=P^2$, the operator $T=\1+iP$ is bounded with bounded inverse $T^{-1}=\1-\frac{i+1}{2}\,P$. In particular, we will consider here $P$ as the projection operator on the normalized vector $u$: $Pf=\left<u,f\right>u$, for all $u\in\Hil$. Hence, calling $\varphi_n=Te_n=e_n+i\left<u,e_n\right>u$ and $\psi_n=(T^{-1})^*e_n=e_n+\frac{i-1}{2}\left<u,e_n\right>u$, the sets $\F_\vp=\{\vp_n\}$ and $\F_\psi=\{\psi_n\}$ are biorthogonal Riesz bases. The state $\omega_0$ in (\ref{22a}) can be {\em deformed} in the ways discussed previously. In particular, the state $\omega_{\varphi\varphi}$ in (\ref{24}) can be rewritten in terms of $\omega_0$ as follows:
$$
\omega_{\varphi\varphi}(X)=\frac{1}{1+\omega_0(P)}\omega_0\left(X+i[X,P]+PXP\right),
$$
for all bounded $X$. In fact: $Z_{\varphi\varphi}=\sum_n{\sf e}^{-\beta n}\|\varphi_n\|^2 = \sum_n{\sf e}^{-\beta n}(1+|\left<u,e_n\right>|^2) = Z_0(1+\omega_0(P)) $, while $\left<\varphi_n, X\varphi_n\right> = \left<Te_n, XTe_n\right> = \left<e_n, Xe_n\right> + \left<e_n, i(XP-PX)e_n\right> + \left<e_n, PXPe_n\right>$, which implies the previous equation.

It is also interesting to compare the different definitions of time evolution introduced so far. In particular we consider two normalized vectors $\Phi$ e $\Psi$ in $\Hil$ and the rank one operator $Y$ defined as $Yf=\left<\Psi,f\right>\Phi$. We are now going to show that, already for this simple operator, the dynamics $\alpha_0^t$ and $\alpha_\varphi^t$ are different:
$$
\alpha_0^t(Y)f=\left(|\Phi_0(t)\left>\right<\Psi_0(t)|\right)f,
$$
where we have defined $\Phi_0(t)={\sf e}^{iH_0t}\Phi$, $\Psi_0(t)={\sf e}^{iH_0t}\Psi$, and $\left(|h_1\left>\right<h_2|\right)f=\left<h_2,f\right>h_1$, for all $h_1, h_2$ and $f$ in $\Hil$. On the other hand, we get
$$
\alpha_\vp^t(Y)f=\left(|\Phi_\vp(t)\left>\right<\Psi_\psi(t)|\right)f,
$$
where $\Phi_\vp(t)=(T^{-1})^*{\sf e}^{iH_0t}T^*\Phi$, $\Psi_\psi(t)=T{\sf e}^{iH_0t}T^{-1}\Psi$. Of course, if $H_0$ commutes with $T$ then $\alpha_\vp^t(Y)f=\alpha_0^t(Y)f$, equality which can be extended to all the operators of $B(\Hil)$. However, already for this particular choice of $Y$, $\alpha_\vp^t$ and $\alpha_0^t$ are different.

Incidentally we can use (\ref{210}) to see that, if $\Phi$ and $\Psi$ belong respectively to the linear span of the $\psi_n$'s and of the $\vp_n$'s, then $T{\sf e}^{iH_0t}T^{-1}\Psi={\sf e}^{iHt}\Psi$ and $(T^{-1})^*{\sf e}^{iH_0t}T^*\Phi={\sf e}^{iH^*t}\Phi$, simplifying in this way the expressions of $\Phi_\vp(t)$ and $\Psi_\psi(t)$.

\section{More generalizations of $\omega_0$}

What we have done in the previous section suggests to consider  two other linear functionals, defined mixing the roles of $\F_\varphi$ and $\F_\psi$ as follows:

\be \left\{\begin{array}{l}
\omega_{\varphi\psi}(X)=\frac{1}{Z_{\varphi\psi}}\sum_n{  \e}^{-\beta n}\left<\varphi_n,X\psi_n\right>,\\
{}\\
\omega_{\psi\varphi}(X)=\frac{1}{Z_{\psi\varphi}}\sum_n{  \e}^{-\beta n}\left<\psi_n,X\varphi_n\right>.\end{array}\right.
\label{31}\en

Here
$Z_{\varphi\psi}=\sum_n{  \e}^{-\beta n}\left<\varphi_n,\psi_n\right>=Z_0$ and $Z_{\psi\varphi}=\sum_n{  \e}^{-\beta n}\left<\psi_n,\varphi_n\right>=Z_0$. Of course, both these series converge, without any need of further assumption: biorthogonality of $\F_\varphi$ and $\F_\psi$ is enough. Of course, this does not mean that also $\sum_n{  \e}^{-\beta n}\left<\varphi_n,X\psi_n\right>$ converges, in general. Hence, we have to consider first of all the problem of the existence of $\omega_{\varphi\psi}$ and $\omega_{\psi\varphi}$. However, it is easy to check first that
\be
\omega_{\varphi\psi}(X)=\overline{\omega_{\psi\varphi}(X\*)},
\label{32}\en
for all $X\in\A$ for which both sides of this equation exist. Then, from now on, we will concentrate on $\omega_{\varphi\psi}$, since the analysis of $\omega_{\psi\varphi}$ can be traced back to that of  $\omega_{\varphi\psi}$.

Let then $\A_{\varphi\psi}$ be the set of all the operators $X$ on $\Hil$, not necessarily bounded,
  such that  {$|\omega_{\varphi\psi}(X)|<\infty$}. If $\F_\varphi$ and $\F_\psi$
are WBBS,  then $B(\Hil)\subset \A_{\varphi\psi}$. In
fact, in this case, we have
\begin{align}
&|\omega_{\varphi\psi}(X)|\leq \frac{1}{Z_0}\sum {  \e}^{-\beta n}\|\varphi_n\|\|X\psi_n\|\leq \frac{\|X\|}{Z_0}\sum_n\left({  \e}^{-\beta n/2}\|\varphi_n\|\right)\left({  \e}^{-\beta n/2}\|\psi_n\|\right)
\nonumber \\
& \leq\frac{\|X\|}{Z_0}\sqrt{\sum_n{  \e}^{-\beta
n}\|\varphi_n\|^2}\,\sqrt{\sum_n{  \e}^{-\beta
n}\|\psi_n\|^2}<\infty, \label{33}\end{align} because of the definition of
WBBS. { It is also easy to check that
$\A_{\varphi\psi}$ is really "larger than" $B(\Hil)$}.
{In fact, even if $S_\varphi$ is unbounded,
we obtain
$\omega_{\varphi\psi}(S_{\varphi})=\frac{1}{Z_0}\sum_n{\sf e}^{-\beta n}\|\varphi_n\|^2$, which is surely finite. Hence  $S_\varphi\in\A_{\varphi\psi}$.}
It is clear that $\omega_{\varphi\psi}$ is a normalized linear functional, {i.e., $\omega_{\varphi\psi}(\1)=1$.}
{ By \eqref{33} it follows that $\omega_{\varphi\psi}$ is bounded. Then $\omega_{\varphi\psi}$ will be positive if, and only if,
$$\|\omega_{\varphi\psi}\|= \sup_{\|X\|\leq 1}|\omega_{\varphi\psi}(X)|=\omega_{\varphi\psi}(\1)=1.$$ }

{Consider, for instance, the case where $\F_\varphi$ and $\F_\psi$ are Riesz bases. Then, simple calculations show that, for every $X \in B(\Hil)$,
\begin{align}\label{34} \omega_{\varphi\psi}(X)&= \frac{1}{Z_0} tr({\sf e}^{-\beta H^*}X) = \frac{1}{Z_0} tr({\sf e}^{-\beta H_0}T\*X(T\*)^{-1})=\omega_0(T\*X(T\*)^{-1}), \\
\nonumber \omega_{\psi\varphi}(X)&= \frac{1}{Z_0} tr({\sf e}^{-\beta H}X)=\frac{1}{Z_0} tr({\sf e}^{-\beta H}T^{-1}XT)=\omega_0(T^{-1}XT).  \end{align}
As is Section 3 we find that our new "states" are related to the original one, $\omega_0$, and that the relation is explicitly provided by the operator $T$ and its relatives ($T^*$, $T^{-1}$ and ${T^*}^{-1}$).

Using \eqref{34}, we get
$$ |\omega_{\varphi\psi}(X)|\leq \| T\*X(T\*)^{-1}\|, \qquad |\omega_{\psi\varphi}(X)|\leq \|T^{-1}XT\|, \quad X \in B(\Hil).$$
Therefore, if $\|T\|\|T^{-1}\|=1$, both $\omega_{\varphi\psi}$ and $\omega_{\psi\varphi}$ are positive, but we do not know what happens if $\|T\|\|T^{-1}\|>1$.

}




 An extended version of the
KMS-condition can be deduced also in this case, but making
reference to the {\em time evolution} $ \alpha_{\psi}^t(X)={  \e}^{it{
H}\*}X{  \e}^{-it{  H}\*}, $ for all
$X\in B(\Hil)$, already introduced in (\ref{add2}). We get
 \be
\omega_{\varphi\psi}(BA)=\omega_{\varphi\psi}\left(A\alpha_{\psi}^{i\beta}(B)\right),
\label{36}\en and both sides are equal to
$\frac{1}{Z_0}tr\left({  \e}^{-\beta H\*}BA\right)$.
Similar conclusions can be deduced for $\omega_{\psi\varphi}$. In
this case, however, $\alpha_{\psi}^t$ should be replaced by $\alpha_{\varphi}^t$.

\vspace{2mm}

{\bf Remark:--} It is worth noting that the condition in (\ref{36}) looks much closer to the original KMS-condition than the ones obtained in the previous section. In particular, the operator $T$ plays no role here, while it was quite relevant in (\ref{215}) and in (\ref{217}). This depends, of course, on the particular definition of $\omega_{\varphi\psi}$, which involves both $\varphi_n$ and $\psi_n$, while this was not so neither for $\omega_{\varphi\varphi}$ nor for $\omega_{\psi\psi}$.

\vspace{3mm}

Those considered so far are not the only possible generalizations of the state $\omega_0$. Other, quite general, possibilities arise if we replace the  operators $\e^{-\beta H}$ and $\e^{-\beta H\*}$ in (\ref{24}) with some generic operator $A\*A$, for some fixed nonzero $A\in B(\Hil)$, not necessarily related to $\F_\varphi$ or $\F_\psi$. In particular, we define the following functionals on $B(\Hil)$:
\be\label{add3} \left\{\begin{array}{l}
\omega'_{\varphi\varphi}(X)=\frac{1}{Z'_{\varphi\varphi}}\sum_n \left<\varphi_n,A\*AX\varphi_n\right>,\\
{}\\
\omega'_{\psi\psi}(X)=\frac{1}{Z'_{\psi\psi}}\sum_n
 \left<\psi_n,A\*AX\psi_n\right>,\\
{}\\
\omega'_{\varphi\psi}(X)=\frac{1}{Z'_{\varphi\psi}}\sum_n
\left<\varphi_n,A\*AX\psi_n\right>,\\
{}\\
\omega'_{\psi\varphi}(X)=\frac{1}{Z'_{\psi\varphi}}\sum_n
\left<\psi_n,A\*AX\varphi_n\right>,
\end{array}\right.
\en where $Z'_{\varphi\varphi}=\sum_n \|A\varphi_n\|^2,$
$Z'_{\psi\psi}=\sum_n\|A\psi_n\|^2$,
$Z'_{\varphi\psi}=\sum_n\ip{A\varphi_n}{A\psi_n},$ and
$Z'_{\psi\varphi}=\sum_n\ip{A\psi_n}{A\varphi_n}$. Of course, all the normalization factors need to be non zero. This is granted, at least when $\F_\varphi$ and $\F_\psi$ are at least complete in $\Hil$. In fact, let, for instance, assume that $Z'_{\varphi\varphi}=0$. Then $A\varphi_n=0$ for all $n$. Hence $\left<A^*f,\varphi_n\right>=0$ for all $n$ and for all $f\in\Hil$. Hence, if $\F_\varphi$ is complete, $A^*f=0$ for all such $f$'s. Hence $A^*=0$ and $A=0$ as a consequence. Similarly, if $A\neq0$ then $Z'_{\psi\psi}>0$. As for $Z'_{\varphi\psi}$ and $Z'_{\psi\varphi}$, they are also strictly positive if $A\neq0$ is an Hilbert-Schmidt operator, at least when $\F_\varphi$ and $\F_\psi$ are Riesz bases. In fact, under these assumptions we can check that $Z'_{\varphi\psi}=Z'_{\psi\varphi}=tr(A^*A)$, where $tr$ is the natural trace on $B(\Hil)$, which is always bounded and strictly positive.

We need now that also the series in (\ref{add3}) all converge.  {This can be achieved by imposing some (further) condition on the
operator $A$.} Let $
\{e_n \in\Hil, n \geq 0 \}$ be an o.n. basis of $\Hil$ and $T\in B(\Hil)$ a
self-adjoint, invertible operator  such that
$\vp_n=Te_n$. Then we have
$\omega'_{\varphi\varphi}(X)=\frac{1}{Z'_{\varphi\varphi}}tr(X|T\*|^2|A|^2)$.

Indeed we have
\begin{eqnarray*}
\omega'_{\varphi\varphi}(X) &=&\frac{1}{Z'_{\varphi\varphi}}\sum_{n} \left<\vp_n,A\*A X \vp_n \right>=\frac{1}{Z'_{\varphi\varphi}}\sum_{n} \ip{Te_n}{ A\*A XTe_n} \\
 &=&\frac{1}{Z'_{\varphi\varphi}} \sum_{n} \ip{e_n}{ T\* A\*A X Te_n}=\frac{1}{Z'_{\varphi\varphi}}tr( T\* A\*A X T)=\frac{1}{Z'_{\varphi\varphi}}tr(X|T\*|^2|A|^2).
\end{eqnarray*}

A similar relation between $\omega'_{\psi\psi}$ and the trace can be deduced if $T^{-1}$ is bounded. We recall that when both $T$ and $T^{-1}$ are bounded, then $\F_\vp$ and $\F_\psi$ are Riesz bases. When this happens
{it is interesting to observe that the state $\omega'_{\vp\psi}$ turns out to be independent of $T$. This implies, among other things, that $\omega'_{\vp\psi}(X)=\omega'_{\vp'\psi'}(X)$ for all bounded $X$ and for any two pairs of biorthogonal Riesz bases $(\F_\vp, \F_\psi)$ and $(\F_\vp', \F_\psi')$.   In this case, in fact

\begin{eqnarray*}
\omega'_{\vp\psi}(X) &=&\frac{1}{Z'_{\varphi\psi}}\sum_{n}\ip{\vp_n}{A\*A X\psi_n}=\frac{1}{Z'_{\varphi\psi}}\sum_{n} \ip{Te_n}{A\*AX (T^{-1})\*e_n} \\
 &=& \frac{1}{Z'_{\varphi\psi}}\sum_{n} \ip{e_n}{ T\*A\*A X (T^{-1})\*e_n}=\frac{1}{Z'_{\varphi\psi}}tr(T\*A\*A X (T^{-1})\*)\\
 &=&\frac{1}{Z'_{\varphi\psi}}tr(A\*AX(T^{-1})\*T\*)= \frac{1}{Z'_{\varphi\psi}}tr(A\*AX),
\end{eqnarray*}
and $T$ plays no role at all, as stated. This result extends to the present context a well known result on traces, which are independent of the particular o.n. basis used to compute them.
}

\section{Concluding remarks}

In this paper we have discussed some peculiarities concerning operators which are similar and which can be self-adjoint or not, but all having real eigenvalues. We have proposed several generalizations of the notions of {\em the algebraic Heisenberg dynamics} and of {\em Gibbs states}, and we have discussed which kind of KMS-like relations appear out of them. This particular aspect of our research,  in view of its concrete applications to quantum mechanics, is particularly promising and a deeper understanding of the various KMS-conditions deduced in the paper is, in our opinion, needed. This can be useful also in view of a correct definition of the dynamics of systems driven by non self-adjoint Hamiltonians, following the analysis already undertaken in \cite{bag1} and in \cite{bag2}.

\section*{Acknowledgments}

 {The authors acknowledge partial support from GNFM, GNAMPA and from the Universit\`a di Palermo.}

\end{document}